\documentclass[journal]{IEEEtran}
\usepackage{multicol}
\usepackage{graphicx}
\usepackage[latin1]{inputenc}
\usepackage{amsmath}
\usepackage{amsfonts}
\usepackage{acronym}
\usepackage{amsthm}
\usepackage{amssymb}
\usepackage{dsfont}    
\usepackage{epsfig}
\usepackage{ifthen}
\usepackage{psfrag}
\usepackage{url}
\usepackage{listings}
\usepackage{comment}
\usepackage{fancyhdr}
\usepackage[newitem,newenum]{paralist}
\usepackage{booktabs}
\usepackage{tabularx}
\usepackage{color}
\usepackage{cite}
\usepackage{morefloats}
\usepackage{setspace}
\usepackage[newitem,newenum]{paralist}
\usepackage{booktabs}
\usepackage[normalsize]{subfigure}
\newtheorem{theorem}{Theorem}
\newtheorem{proposition}{Proposition}

\usepackage{color}
\newcounter{Mytempeqncnt}

\begin{document}
%
\title{Relay Placement for Physical Layer Security: A Secure Connection Perspective}
\author{Jianhua~Mo, Meixia~Tao,~\IEEEmembership{Senior~Member,~IEEE}, and Yuan~Liu,~\IEEEmembership{Student~Member,~IEEE}
\thanks{Manuscript received March 16, 2012; revised April 5, 2012. The associate editor coordinating the
review of this letter and approving it for publication was Kai Kit Wong.}
\thanks{The authors are with the Dept. of Electronic Engineering, Shanghai Jiao Tong University, P. R. China(email:\{mjh, mxtao, yuanliu\}@sjtu.edu.cn).}
\thanks{This work is supported by the Innovation Program of Shanghai Municipal Education Commission under grant 11ZZ19 and the Joint Research Fund for Overseas Chinese, Hong Kong and Macao Young Scholars under grant 61028001.}
}
\maketitle

\begin{abstract}
This work studies the problem of secure connection in cooperative wireless communication with two relay strategies, decode-and-forward (DF) and randomize-and-forward (RF).
The four-node scenario and cellular scenario are considered.
For the typical four-node (source, destination, relay, and eavesdropper) scenario,
we derive the optimal power allocation for the DF strategy and find that the RF strategy is always better than the DF to enhance secure connection.
In cellular networks, we show that without relay, it is difficult to establish secure connections from the base station to the cell edge users. The effect of relay placement for the cell edge users is demonstrated by simulation.
For both scenarios, we find that the benefit of relay transmission increases when path loss becomes severer.

\end{abstract}

\begin{IEEEkeywords}
Relay placement, physical layer security, secure connection, outage.
\end{IEEEkeywords}

\setlength\arraycolsep{2pt}

\section{Introduction}
Wireless communication is inherently vulnerable to eavesdropping due to its broadcast nature. However, by exploiting the randomness of the wireless propagation channels, we can enhance the security in physical layer\cite{Wyner_Bell75}.
On the other hand, cooperative relay has received much attentions due to its ability of power reduction, coverage extension, and throughput enhancement. Thus, it is attractive and promising to utilize these benefits for physical layer security.

The authors in \cite{Lai_IT08} discussed the four-node (source, destination, relay, eavesdropper) secure communication system from an information-theoretical perspective and studied several relay strategies, such as decode-and-forward (DF) and  noise-forwarding (NF).
Authors in \cite{Jeong_TSP11} investigated the secrecy rate maximization problem for the four-node system in multicarrier relay channel with the DF strategy.
For the secure transmission system with multiple relays, the beamforming and relay selection was considered in \cite{Dong_TSP10} and \cite{Krikidis_TWC09} respectively under the assumption that the eavesdropper only wiretaps the second hop during the cooperative transmission.
A joint problem of secure resource allocation and scheduling was studied in \cite{Ng_TWC11} for cellular networks with DF relays.

In \cite{Koyluoglu_IT12}, the authors proposed another relay strategy in which the relays add independent randomization in each hop (we refer it as randomize-and-forward (RF)).  It was proved therein that under the RF strategy, securing each individual hop is sufficient for securing the end-to-end transmission. Scaling law of secrecy capacity were then obtained by using such RF strategy in \cite{Koyluoglu_IT12}. The authors in \cite{Goeckel_JSAC11} analyzed the maximal number of eavesdroppers that can be tolerated in the two-hop secure transmission with jamming when RF strategy was used.

Our paper is motivated twofold. First, though the cooperative secure transmission has been studied in several scenarios (e.g., \cite{Lai_IT08, Jeong_TSP11,Krikidis_TWC09, Dong_TSP10, Ng_TWC11, Koyluoglu_IT12, Goeckel_JSAC11, Popovski_TIFS09}), to our best knowledge, no attempt has been made to study relay placement for physical layer security.
Second, although fading was utilized to achieve physical layer security (e.g., \cite{Bloch_IT08}), there is no theoretical analysis about the impacts of large scale path loss on security.

The main contributions of this work are summarized as follows. 1) In the four-node system, we derive the optimal power allocation for the DF strategy and find that the RF strategy is always better than the DF in terms of secure connection probability. 2) We show that when the eavesdropper is far away, placing the relay at the midpoint of the source and the destination is asymptotical optimal, and the outage probability of the RF strategy is about half of the DF. 3) In cellular networks, we derive the secure outage probability without relay and show the superiority of placing RF relay over DF relay through simulation. 4) We analyze the effects of path loss on secure connection  and find that relay transmission achieves more benefit when path loss is severer.

\begin{figure*}[t]
\normalsize
\setcounter{Mytempeqncnt}{ \value{equation}}
\setcounter{equation}{5}
\begin{equation}
\label{eqn_plague_in_lambda}
P_{DF}\left(\boldsymbol{d}\right) =1-\frac{{d_{se}^\alpha d_{re}^\alpha }}{{\left( {d_{rd}^\alpha  + d_{re}^\alpha } \right)\left( {d_{sr}^\alpha  + d_{se}^\alpha } \right) + d_{sr}^\alpha d_{rd}^\alpha  + \frac{{{p_r}d_{sr}^\alpha }}{{{p_s}}}\left( {d_{sr}^\alpha  + d_{se}^\alpha } \right) + \frac{{{p_s}d_{rd}^\alpha }}{{{p_r}}}\left( {d_{rd}^\alpha  + d_{re}^\alpha } \right)}}.
\end{equation}
\vspace*{4pt} \hrulefill
\setcounter{equation}{ \value{Mytempeqncnt}}
\end{figure*}

\section{Main Results}
\label{four_node_systems}
\begin{figure}[t]
\begin{centering}
\includegraphics[scale=.40]{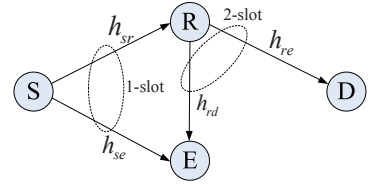}
\vspace{-0.1cm}
\centering
 \caption{An illustration of the four-node system model, where $S$, $R$, $D$, and $E$ represent the source, relay, destination, and eavesdropper, respectively.}\label{fig:Four_node}
\end{centering}
\vspace{-0.3cm}
\end{figure}
We consider two scenarios, i.e., the four-node system and cellular networks.
For both scenarios, we assume that the cooperative transmission consists of two phases. During the first phase,  the source (or base station (BS)) transmits while the relay (or relay station (RS)) listens. During the second phase, the relay transmits and the destination (or mobile user (MU)) listens. The eavesdropper overhears in both phases. Here we assume that the direct link from the source (or BS) to the destination (or MU) is not available. The wireless fading channels are modeled by large-scale fading with path loss exponent $\alpha$ and small-scale block Rayleigh fading.

\textsl{Notations}: Subscripts $s$, $r$, $d$ and $e$ represent the source (or BS), relay (or RS), destination (or MU) and eavesdropper, respectively. $d_{ij}$ and $h_{ij}$ denote the distance and channel coefficient between node $i$ and $j$, respectively. $p_s$ and $p_r$ denote transmit powers of the source and relay, respectively. For brevity, we denote $\boldsymbol{d}:= \left\{d_{sr}, d_{rd}, d_{se}, d_{re}\right\} $ and $\boldsymbol{p}:=\left\{ p_s, p_r \right\}$.
\subsection{Four-node System}

In this subsection, we study a four-node system consisting of a source, a destination,
an eavesdropper and a relay shown in Fig. \ref{fig:Four_node}.
Both DF and RF strategies are analyzed in terms of secure connection probability.
Here the knowledge of channel state information (CSI) for the eavesdropper is assumed to be known as the eavesdropper may be another legitimate user who transmits signals but is not allowed to receive the confidential message from the source \cite{Bloch_IT08}.
\subsubsection{Decode-and-Forward (DF)}
For the DF strategy, the relay uses the same codebook as the source's.
The achievable rate from the source to the destination is given by
\begin{equation}
R_{d} = \frac{1}{2} \min \left\{ \log_2 \left(1+\frac{{p_s {\left| {{h_{sr}}} \right|}^2 }}{{d_{sr}^\alpha }}\right),\log_2 \left(1+\frac{p_r {\left| {{h_{rd}}} \right|}^2}{d_{rd}^\alpha } \right)\right\}.
\end{equation}
The eavesdropper wiretaps and combines the signals from both two hops, and as such the information rate at the eavesdropper is
\begin{equation}
R_{e} = \frac{1}{2} \log_2 \left(1+\frac{p_s {\left| {{h_{se}}} \right|}^2}{{d_{se}^\alpha }} + \frac{p_r {\left| {{h_{re}}} \right|}^2 }{d_{re}^\alpha }\right).
\end{equation}
The secrecy rate of the system is
\begin{equation}
R_s = \max \left\{R_{d}-R_{e}, 0\right\}.
\end{equation}

Similar to \cite{Pinto_TIFS12a,Zhou_TWC11}, we define that the connection between the source and destination is secure if $R_s > 0$.
Then the secrecy outage probability can be defined as
\begin{eqnarray*}
& &P_{DF}\left( \boldsymbol{d}, \boldsymbol{p} \right) = Pr \left(R_s<0\right)  \label{eqn_P_out_DF_formulation} \\
&=& Pr \left( \min \left\{{\frac{p_s {|h_{sr}|}^2}{d_{sr}^\alpha },\frac{ p_r {|h_{rd}|}^2 }{ d_{re}^\alpha }}\right\} < \frac{p_s {\left| {{h_{se}}} \right|}^2}{{d_{se}^\alpha }} + \frac{ p_r {\left| {{h_{re}}} \right|}^2 }{ d_{re}^\alpha }\right). \nonumber
\end{eqnarray*}
\begin{proposition}
\label{proposition_DF}
For the DF strategy, the optimal power allocation satisfies
\begin{equation}
\label{eqn_DF_relay_power_allocation}
\frac{{p_r}}{{p_s}} = \sqrt{ \frac{{d_{rd}^\alpha \left( {d_{rd}^\alpha  + d_{re}^\alpha } \right)}}{{d_{sr}^\alpha \left( {d_{sr}^\alpha  + d_{se}^\alpha } \right)}}},
\end{equation}
and the minimal outage probability, denoted as $P_{DF}\left(\boldsymbol{d}\right)$, is
\begin{equation}
\label{eqn_DF_minimal_outage_probability}
P_{DF}\left( \boldsymbol{d} \right) =  1- \frac{{d_{se}^\alpha d_{re}^\alpha }}{{\left( \sqrt{\left( {d_{sr}^\alpha  + d_{se}^\alpha } \right) \left( {d_{rd}^\alpha  + d_{re}^\alpha } \right) }  + \sqrt{d_{sr}^\alpha d_{rd}^\alpha }  \right)}^2}.
\end{equation}
\end{proposition}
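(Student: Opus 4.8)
The plan is to recast the non-secure event in terms of four independent exponential random variables, evaluate the resulting probability in closed form, and then minimize it over the power ratio $p_r/p_s$.

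First, under block Rayleigh fading the gains $|h_{sr}|^2,|h_{rd}|^2,|h_{se}|^2,|h_{re}|^2$ are independent unit-mean exponentials, so $X:=p_s|h_{sr}|^2/d_{sr}^\alpha$, $Y:=p_r|h_{rd}|^2/d_{rd}^\alpha$, $U:=p_s|h_{se}|^2/d_{se}^\alpha$ and $V:=p_r|h_{re}|^2/d_{re}^\alpha$ are independent exponential random variables with rates $\lambda_1=d_{sr}^\alpha/p_s$, $\lambda_2=d_{rd}^\alpha/p_r$, $\mu_1=d_{se}^\alpha/p_s$ and $\mu_2=d_{re}^\alpha/p_r$. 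Since $\log_2(1+\cdot)$ is increasing, the link is not securely connected exactly when $\min\{X,Y\}<U+V$, so $P_{DF}(\boldsymbol{d},\boldsymbol{p})=1-\Pr\!\left(\min\{X,Y\}\ge U+V\right)$. Using that $\min\{X,Y\}$ is exponential with rate $\lambda_1+\lambda_2$, I would then condition on $(U,V)$ and invoke independence to get
\[
\Pr\!\left(\min\{X,Y\}\ge U+V\right)=\mathbb{E}\!\left[e^{-(\lambda_1+\lambda_2)U}\right]\,\mathbb{E}\!\left[e^{-(\lambda_1+\lambda_2)V}\right]=\frac{\mu_1}{\mu_1+\lambda_1+\lambda_2}\cdot\frac{\mu_2}{\mu_2+\lambda_1+\lambda_2},
\]
the last equality being the Laplace transform of an exponential distribution. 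Substituting the four rates and clearing the common power of $p_s$ then reproduces \eqref{eqn_plague_in_lambda} as a function of $\boldsymbol{d}$ and $p_r/p_s$.

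Next, I would minimize \eqref{eqn_plague_in_lambda} over $p_r/p_s$, which amounts to minimizing its denominator. Only the two terms $\tfrac{p_r d_{sr}^\alpha}{p_s}(d_{sr}^\alpha+d_{se}^\alpha)$ and $\tfrac{p_s d_{rd}^\alpha}{p_r}(d_{rd}^\alpha+d_{re}^\alpha)$ there depend on $p_r/p_s$, and their product does not; hence by the AM--GM inequality their sum is minimized when the two terms are equal, which gives the power ratio \eqref{eqn_DF_relay_power_allocation} and the minimal sum $2\sqrt{d_{sr}^\alpha d_{rd}^\alpha(d_{sr}^\alpha+d_{se}^\alpha)(d_{rd}^\alpha+d_{re}^\alpha)}$. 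Putting this back, the denominator becomes $(d_{rd}^\alpha+d_{re}^\alpha)(d_{sr}^\alpha+d_{se}^\alpha)+d_{sr}^\alpha d_{rd}^\alpha+2\sqrt{d_{sr}^\alpha d_{rd}^\alpha(d_{sr}^\alpha+d_{se}^\alpha)(d_{rd}^\alpha+d_{re}^\alpha)}$, which is the perfect square $\left(\sqrt{(d_{sr}^\alpha+d_{se}^\alpha)(d_{rd}^\alpha+d_{re}^\alpha)}+\sqrt{d_{sr}^\alpha d_{rd}^\alpha}\right)^{2}$, yielding \eqref{eqn_DF_minimal_outage_probability}.

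The main obstacles I anticipate are the closed-form evaluation of $\Pr(\min\{X,Y\}\ge U+V)$ --- carrying out the minimum-of-exponentials reduction and the Laplace-transform step cleanly --- and spotting that the minimized denominator factors as a perfect square; the AM--GM optimization and the surrounding algebra are routine.
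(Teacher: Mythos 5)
Your proposal is correct and follows essentially the same route as the paper: identify the four scaled channel gains as independent exponentials, derive the closed-form outage probability \eqref{eqn_plague_in_lambda}, and then apply the AM--GM inequality to the two power-dependent terms in its denominator to obtain \eqref{eqn_DF_relay_power_allocation} and the perfect-square form \eqref{eqn_DF_minimal_outage_probability}. The only difference is that you make explicit the step the paper compresses into ``through some derivation,'' namely the min-of-exponentials and Laplace-transform computation of $\Pr(\min\{X,Y\}\ge U+V)$, which checks out.
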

\begin{IEEEproof}
Note that $\frac{{{p_s {|h_{sr}|}^2}}}{{d_{sr}^\alpha }}, \frac{{{p_r {|h_{rd}|}^2}}}{{d_{rd}^\alpha }}, \frac{{{p_s {|h_{se}|}^2}}}{{d_{se}^\alpha }}$ and
$\frac{{{p_r {|h_{re}|}^2}}}{{d_{re}^\alpha }}$ are exponential distributed with means $\frac{{p_s}}{{d_{sr}^\alpha }}, \frac{{{p_r}}}{{d_{rd}^\alpha }}, \frac{{{p_s}}}{{d_{se}^\alpha }}$ and $\frac{{{p_r}}}{{d_{re}^\alpha }}$, respectively. Through some derivation, the outage probability is \eqref{eqn_plague_in_lambda} on the top of this page. Using the inequality of arithmetic and geometric means, we get \eqref{eqn_DF_relay_power_allocation} and \eqref{eqn_DF_minimal_outage_probability}.
\end{IEEEproof}
Proposition \ref{proposition_DF} shows that, to minimize the outage probability, only the power ratio $\frac{p_r}{p_s}$ matters rather than the absolute power.
\begin{figure}[t]
\begin{centering}
\makeatletter\def\@captype{figure}\makeatother
\subfigure[$P_{DF}(\boldsymbol{d})$]{\includegraphics[width=1.65in]{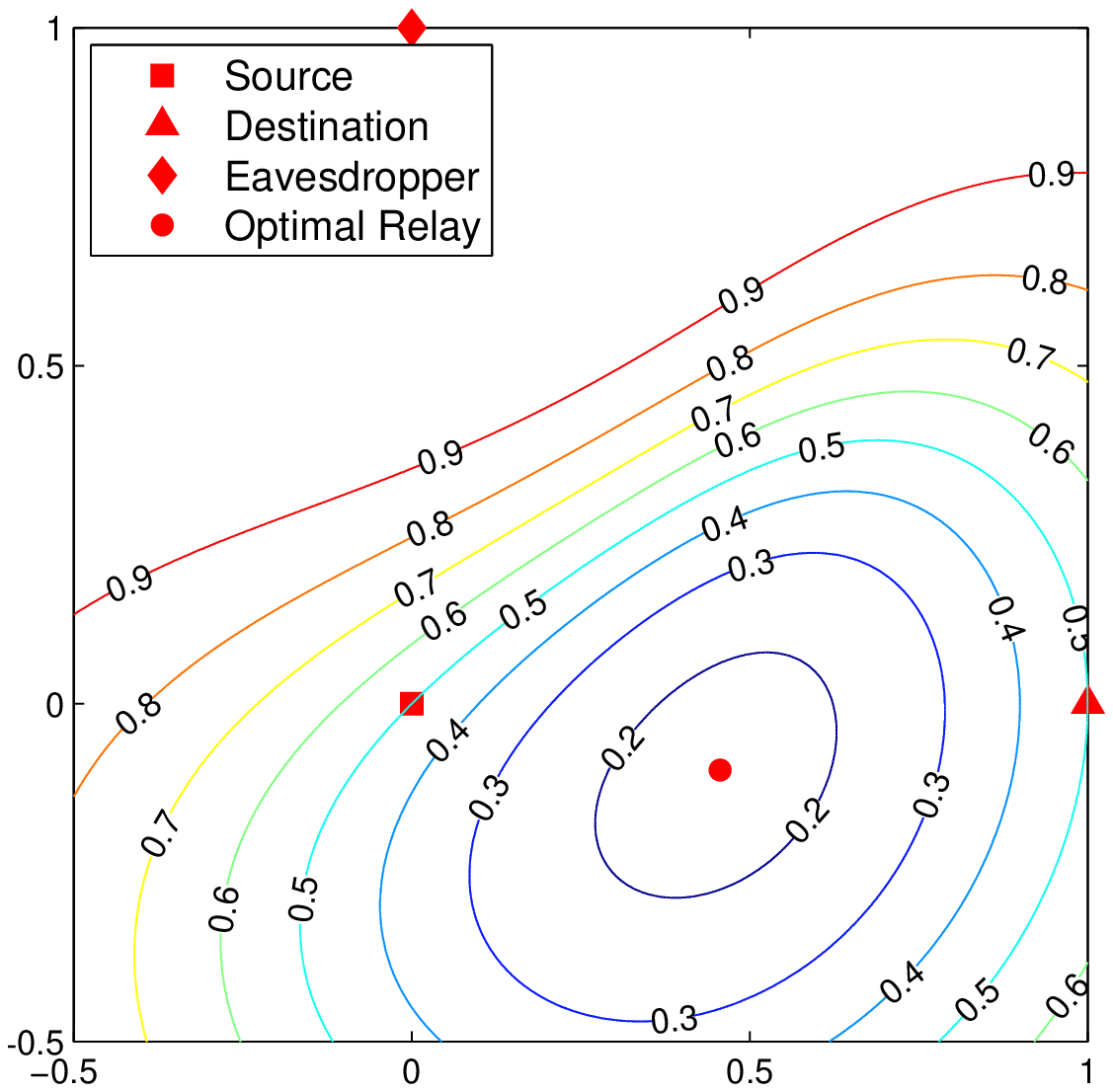} \label{fig_sim}}
\subfigure[$P_{RF}(\boldsymbol{d})$]{\includegraphics[width=1.65in]{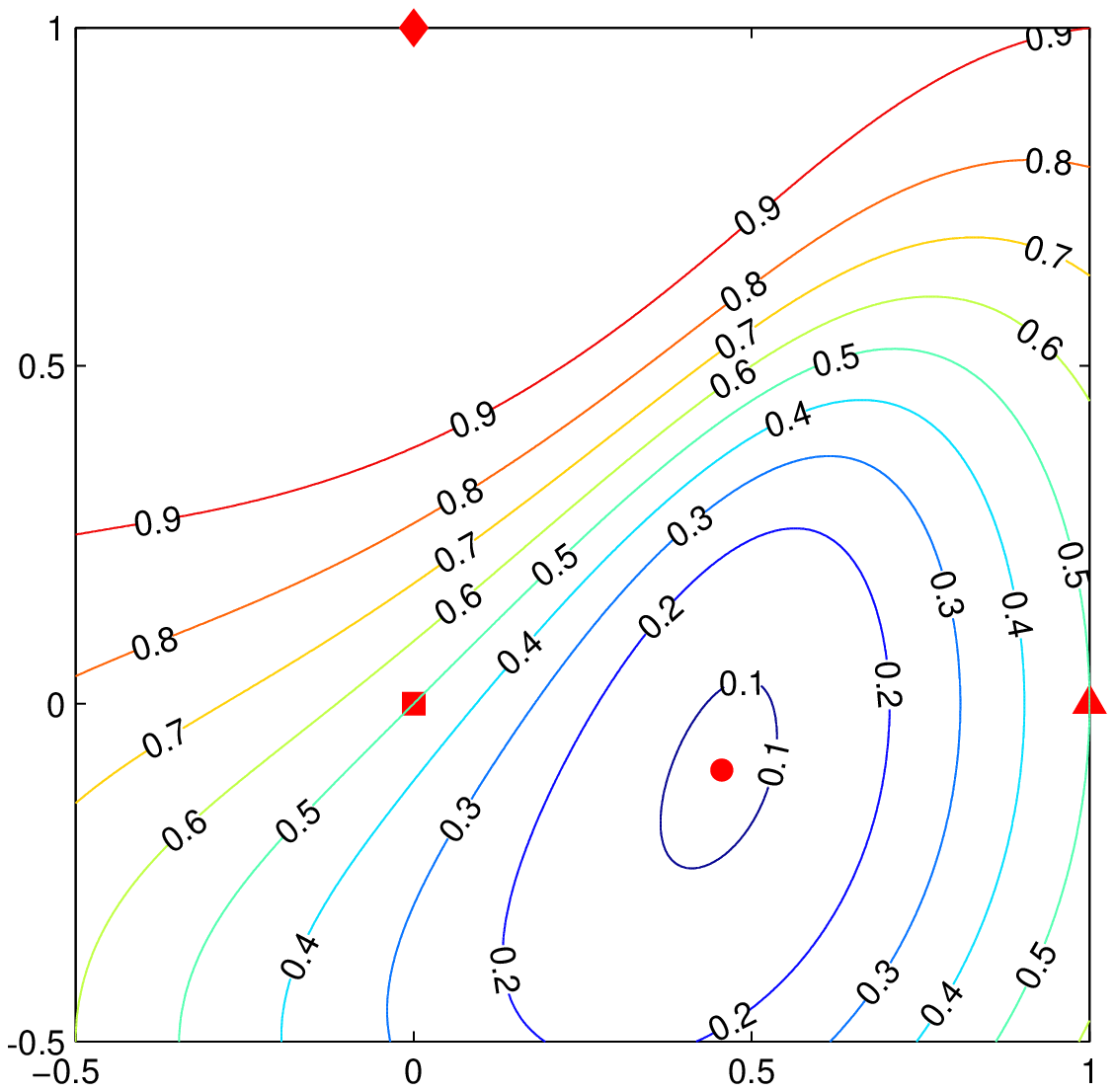} \label{fig_sim}}
\vspace{-0.1cm}
\caption{The outage probability as a function of the position of DF and RF relay. The source, destination and eavesdropper are at $(0,0)$, $(1,0)$ and $(0,1)$, respectively and $\alpha=4$. The optimal DF and RF relay positions are both around $(0.4551,-0.0987)$ with minimal $P_{DF}\left(\boldsymbol{d}\right)\approx 0.1645$ and $P_{RF}\left(\boldsymbol{d}\right)\approx 0.0878$ while $P_{Direct}\left(\boldsymbol{d}\right)=0.5$.}
\label{fig:DF_RF_relay_position_four_node}
\end{centering}
\vspace{-0.3cm}
\end{figure}
\subsubsection{Randomize-and-Forward (RF)}
For the RF strategy, the source and relay use different codebooks to transmit the secret message.
According to \cite{Koyluoglu_IT12}, the message is secured if the two hops are both secured.
Thus the outage probability can be defined as

\setcounter{equation}{6}

\begin{eqnarray}
\label{eqn_RF_outage_formulation}
P_{RF} \left(\boldsymbol{d}\right)&=& 1- Pr \left({\frac{{|h_{sr}|}^2}{d_{sr}^{\alpha}}>\frac{{|h_{se}|}^2}{d_{se}^{\alpha}}}\right) Pr \left(\frac{{|h_{rd}|}^2}{d_{rd}^{\alpha}}> \frac{{|h_{re}|}^2}{d_{re}^{\alpha}}\right) \nonumber \\
&=& 1 - \frac{d_{se}^{\alpha} d_{re}^{\alpha}}{ \left(d_{sr}^{\alpha}+ d_{se}^{\alpha}\right) \left(d_{rd}^{\alpha}+d_{re}^{\alpha}\right)}.
\end{eqnarray}

 \eqref{eqn_RF_outage_formulation} shows that for the RF strategy, the source and relay powers do not influence the outage probability, which is different from DF.

Since neither \eqref{eqn_DF_minimal_outage_probability} nor \eqref{eqn_RF_outage_formulation} is a convex function of the relay position, we resort to numerical results. In Fig. \ref{fig:DF_RF_relay_position_four_node}, we plot $P_{DF}\left(\boldsymbol{d}\right)$ and $P_{RF}\left(\boldsymbol{d}\right)$ as functions of the relay position. We find that the optimal positions of the DF and RF relays are both near to the midpoint of the source and destination. Moreover, the RF strategy is better than the DF strategy.
\begin{theorem}
\label{proposition_RF_better_DF}
  For the four-node system, the outage probability of the DF strategy is always larger than  that of RF strategy.
\end{theorem}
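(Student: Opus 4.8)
The plan is to prove the theorem by a direct comparison of the two closed-form expressions already available: the minimal DF outage probability $P_{DF}(\boldsymbol{d})$ of Proposition~\ref{proposition_DF} in \eqref{eqn_DF_minimal_outage_probability}, and the RF outage probability $P_{RF}(\boldsymbol{d})$ in \eqref{eqn_RF_outage_formulation}. Both are of the form $1 - N/\Delta$ with the \emph{same} strictly positive numerator $N = d_{se}^{\alpha} d_{re}^{\alpha}$, so $P_{DF}(\boldsymbol{d}) > P_{RF}(\boldsymbol{d})$ is equivalent to the single inequality $\Delta_{DF} > \Delta_{RF} > 0$ between the two denominators.

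First I would introduce the shorthand $A := d_{sr}^{\alpha} + d_{se}^{\alpha}$ and $B := d_{rd}^{\alpha} + d_{re}^{\alpha}$, so that $\Delta_{RF} = AB$ while $\Delta_{DF} = \bigl(\sqrt{AB} + \sqrt{d_{sr}^{\alpha} d_{rd}^{\alpha}}\bigr)^{2} = AB + d_{sr}^{\alpha} d_{rd}^{\alpha} + 2\sqrt{AB\, d_{sr}^{\alpha} d_{rd}^{\alpha}}$. Since a genuine relay location is distinct from both the source and the destination, we have $d_{sr} > 0$ and $d_{rd} > 0$; hence the two additional terms are strictly positive and $\Delta_{DF} > \Delta_{RF}$, which gives the claim. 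Strictness really does use $d_{sr}, d_{rd} > 0$: in the degenerate limits $d_{sr}\to 0$ or $d_{rd}\to 0$ the two denominators coincide.

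It is worth emphasizing in the write-up what ``always'' means here. By \eqref{eqn_DF_minimal_outage_probability}, $P_{DF}(\boldsymbol{d})$ is already the minimum over all power ratios $p_r/p_s$, i.e.\ the best possible DF performance, whereas by \eqref{eqn_RF_outage_formulation} the RF outage probability does not depend on the powers at all. Consequently the inequality above shows that RF outperforms DF for \emph{every} power allocation, not just for a convenient one, which is the strongest reading of the statement and needs nothing beyond the comparison already made. I do not foresee a genuine obstacle: all the real work was done in deriving the two closed forms --- in particular the arithmetic--geometric mean step behind \eqref{eqn_DF_minimal_outage_probability} --- and what remains is only the observation that completing the square $\bigl(\sqrt{AB}+\sqrt{d_{sr}^{\alpha}d_{rd}^{\alpha}}\bigr)^{2}$ can only enlarge the positive quantity $AB$.
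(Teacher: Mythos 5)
Your proposal is correct and is essentially the paper's own argument: the authors compare \eqref{eqn_DF_minimal_outage_probability} and \eqref{eqn_RF_outage_formulation} by noting the identity $\sqrt{1/(1-P_{DF}(\boldsymbol{d}))} = \sqrt{1/(1-P_{RF}(\boldsymbol{d}))} + \sqrt{d_{sr}^{\alpha} d_{rd}^{\alpha}/(d_{se}^{\alpha} d_{re}^{\alpha})}$, which is exactly your denominator comparison $\Delta_{DF} = \bigl(\sqrt{AB}+\sqrt{d_{sr}^{\alpha}d_{rd}^{\alpha}}\bigr)^{2} > AB = \Delta_{RF}$ after dividing by the common numerator and taking square roots. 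Your added remarks on strictness requiring $d_{sr}, d_{rd} > 0$ and on the inequality holding for every DF power allocation are sound refinements but do not change the route.
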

\begin{proof}
  Observing \eqref{eqn_DF_minimal_outage_probability} and \eqref{eqn_RF_outage_formulation}, we have

\begin{equation}
\label{eqn_relation_DF_RF}
    \sqrt { \frac{1}{1-P_{DF}\left(\boldsymbol{d}\right)} } = \sqrt { \frac{1}{1-P_{RF}\left(\boldsymbol{d}\right)} } + \sqrt {\frac{d_{sr}^{\alpha} d_{rd}^{\alpha}} {d_{se}^{\alpha} d_{re}^{\alpha}}}.
\end{equation}
Thus, $P_{DF}\left(\boldsymbol{d}\right) > P_{RF}\left(\boldsymbol{d}\right)$ and Theorem \ref{proposition_RF_better_DF} is proved.
\end{proof}
\begin{proposition}
\label{proposition_outage_large_guard_zone}
 When the eavesdropper is far away from the source and destination,\footnote{This scenario is applicable when the eavesdropper can not come closer to the legitimate nodes than a specified distance or when each legitimate node is able to physically inspect its surroundings and deactivate the nearby eavesdroppers \cite{Pinto_TIFS12a}.}
 the asymptotic optimal relay position is at the midpoint of the source and destination and
\begin{equation}
\label{eqn_outage_large_guard_zone_proposition}
  P_{RF}\left(\boldsymbol{d}\right) \approx \frac{1}{2} P_{DF}\left(\boldsymbol{d}\right) \approx \frac{1}{2^{\alpha-1}} P_{Direct}\left(\boldsymbol{d}\right),
\end{equation}
where $P_{Direct}\left(\boldsymbol{d}\right) = \frac{d_{sd}^{\alpha}}{d_{sd}^{\alpha} + d_{se}^{\alpha}} \approx \frac{d_{sd}^{\alpha}}{d_{se}^{\alpha}} $ is the outage probability of direct transmission.
\end{proposition}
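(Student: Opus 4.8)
The plan is to expand the closed forms \eqref{eqn_DF_minimal_outage_probability} and \eqref{eqn_RF_outage_formulation} to first order in the regime $d_{se},d_{re}\gg d_{sr},d_{rd}$. First I would observe that for relay positions in a bounded region around $S$ and $D$ (which is where the optimum will turn out to lie), $d_{se}\approx d_{re}=:d_e$, since displacing the relay by an amount of order $d_{sd}$ perturbs its distance to the far-away eavesdropper only negligibly relative to $d_e$. Introduce the small parameters $a:=d_{sr}^{\alpha}/d_{se}^{\alpha}$ and $b:=d_{rd}^{\alpha}/d_{re}^{\alpha}$; dividing numerator and denominator of $1-P_{RF}$ and $1-P_{DF}$ by $d_{se}^{\alpha}d_{re}^{\alpha}$ gives $1-P_{RF}=[(1+a)(1+b)]^{-1}$ and, equivalently to \eqref{eqn_relation_DF_RF} with the last square root equal to $\sqrt{ab}$, $(1-P_{DF})^{-1/2}=\sqrt{(1+a)(1+b)}+\sqrt{ab}$. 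Taylor expanding, $P_{RF}\approx a+b$, and using $(1-x)^{-1/2}\approx 1+x/2$ one gets $(1-P_{DF})^{-1/2}\approx 1+\tfrac12(a+b)+\sqrt{ab}$, hence $P_{DF}\approx(a+b)+2\sqrt{ab}=(\sqrt a+\sqrt b)^2$.

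Next I would minimize the leading term of $P_{RF}$, namely $a+b\approx(d_{sr}^{\alpha}+d_{rd}^{\alpha})/d_e^{\alpha}$, over the relay position, i.e. minimize $d_{sr}^{\alpha}+d_{rd}^{\alpha}$. This I would do in two steps: (i) orthogonally projecting any relay position onto the segment $SD$ does not increase $d_{sr}$ or $d_{rd}$ (Pythagoras handles the projection onto the line, and a foot landing outside the segment is moved to the nearer endpoint), so the optimum lies on $SD$; (ii) on $SD$ we have $d_{sr}+d_{rd}=d_{sd}$, and $x^{\alpha}+(d_{sd}-x)^{\alpha}$ is strictly convex in $x$ for $\alpha>1$, hence minimized at the midpoint $x=d_{sd}/2$, with value $d_{sd}^{\alpha}/2^{\alpha-1}$. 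This establishes the asymptotic optimality of the midpoint. Substituting back: at the midpoint $d_{sr}=d_{rd}$ and $d_{se}\approx d_{re}$, so $a\approx b$, whence $P_{DF}\approx 4a$ while $P_{RF}\approx 2a$, i.e. $P_{RF}\approx\tfrac12 P_{DF}$; moreover $P_{RF}\approx(d_{sd}^{\alpha}/2^{\alpha-1})/d_e^{\alpha}$, which together with $P_{Direct}\approx d_{sd}^{\alpha}/d_{se}^{\alpha}\approx d_{sd}^{\alpha}/d_e^{\alpha}$ yields $P_{RF}\approx 2^{-(\alpha-1)}P_{Direct}$, completing the chain \eqref{eqn_outage_large_guard_zone_proposition}.

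The main obstacle is controlling the asymptotics cleanly: one must check that the cross term $\sqrt{ab}$ is genuinely of the same order as $a+b$ (so it cannot be dropped from $P_{DF}$, although still-higher powers can), that the approximation $d_{se}\approx d_{re}$ is legitimate precisely because the candidate optimum stays in a bounded region, and that minimizing the surrogate $d_{sr}^{\alpha}+d_{rd}^{\alpha}$ really does locate the true optimizer of $P_{RF}$ up to lower-order corrections. The projection argument and the one-variable convexity in step (ii) are then routine.
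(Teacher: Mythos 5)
Your proposal is correct and follows essentially the same route as the paper: it expands $P_{DF}$ and $P_{RF}$ to leading order in $d_{sr}^{\alpha}/d_{se}^{\alpha}$ and $d_{rd}^{\alpha}/d_{re}^{\alpha}$, obtaining $P_{DF}\approx(\sqrt a+\sqrt b)^2$ and $P_{RF}\approx a+b$, and then locates the minimizer at the midpoint. The only difference is that you supply details the paper leaves implicit --- the justification of $d_{se}\approx d_{re}$ via boundedness of the candidate optimum, and the projection-plus-convexity argument showing that $d_{sr}^{\alpha}+d_{rd}^{\alpha}$ is minimized at the midpoint of the segment $SD$ --- which strengthens rather than alters the argument.
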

\begin{IEEEproof}
   By \eqref{eqn_DF_minimal_outage_probability} and \eqref{eqn_RF_outage_formulation}, if $d_{se}\gg d_{sr}$ and $d_{re}\gg d_{rd}$,
  \begin{eqnarray}
    & &P_{DF}\left(\boldsymbol{d}\right) \approx \left( {\sqrt {\frac{{d_{sr}^\alpha }}{{d_{se}^\alpha }}}  + \sqrt {\frac{{d_{rd}^\alpha }}{{d_{re}^\alpha }}} } \right)^2 \approx \left( {\sqrt {\frac{{d_{sr}^\alpha }}{{d_{se}^\alpha }}}  + \sqrt {\frac{{d_{rd}^\alpha }}{{d_{se}^\alpha }}} } \right)^2, \label{eqn_DF_outage_large_guard_zone} \\
    & &P_{RF}\left(\boldsymbol{d}\right) \approx \frac{d_{sr}^{\alpha}}{d_{se}^{\alpha}} + \frac{d_{rd}^{\alpha}}{d_{re}^{\alpha}} \approx \frac{d_{sr}^{\alpha}}{d_{se}^{\alpha}} + \frac{d_{rd}^{\alpha}}{d_{se}^{\alpha}}. \label{eqn_RF_outage_large_guard_zone}
  \end{eqnarray}
  To minimize \eqref{eqn_DF_outage_large_guard_zone} and \eqref{eqn_RF_outage_large_guard_zone}, we should have
  \begin{equation}
    {d_{sr}}={d_{rd}}=\frac{1}{2} d_{sd}.
  \end{equation}
  Thus, the Proposition \ref{proposition_outage_large_guard_zone} follows.
\end{IEEEproof}
\begin{figure}[t]
\begin{centering}
\includegraphics[scale=.46]{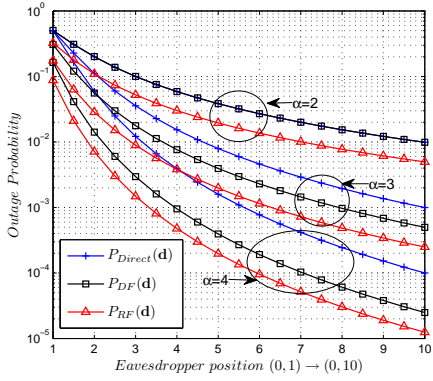}
\vspace{-0.1cm}
 \caption{Outage probability vs the eavesdropper's position. The source and destination are at $(0,0)$ and $(1,0)$, respectively.
 For every given eavesdropper's position, we find the optimal relay position and the corresponding minimal outage probability by numerical search.}\label{fig:Asymptotic_analysis}
\end{centering}
\vspace{-0.3cm}
\end{figure}
Fig. \ref{fig:Asymptotic_analysis} shows the outage probabilities when the eavesdropper moves away from the source and destination. In this figure, the asymptotic results of Proposition \ref{proposition_outage_large_guard_zone} are verified. It is first observed that the outage probability of the RF strategy is indeed about half of the DF. In addition, as the path loss exponent $\alpha$ increases, more benefit can be achieved from relay transmission. Notice that $P_{DF}\left(\boldsymbol{d}\right) \approx P_{Direct}\left(\boldsymbol{d}\right)$ if $\alpha=2$, meaning that DF relay transmission, compared with direct transmission,  brings no benefit at this time!
%

\subsection{Cellular Networks}
\label{cellular_networks}
We now consider a single-cell cellular network shown in Fig. \ref{fig:cellular_networks}. The hexagonal microcell is approximated as a circular cell of radius $R$ with a BS at the center of the cell. The MUs aim to get a secure connection with the BS. Only downlink is considered and uplink transmission can be encrypted by the key transmitted through the secure downlink. The eavesdroppers, which may be other MUs, do not cooperate and are uniformly distributed within the cell.
Therefore, the knowledge of CSI for the eavesdroppers are not needed.
\begin{proposition}
\label{proposition_cellular_outage_no_relay}
If there exist $N$ non-cooperative eavesdroppers uniformly distributed in the cellular networks, the outage probability for direct transmission, $P_{Direct}^{N}$, satisfies
\begin{equation}
\label{eqn_outage_N_Eve_1_Eve}
  P_{Direct}^{1}\left(d_{sd}\right) \leqslant P_{Direct}^{N}\left(d_{sd}\right) \leqslant 1 - \left(1- P_{Direct}^{1}\left(d_{sd} \right)\right)^N ,
\end{equation}
where $P_{Direct}^{1}\left(d_{sd}\right)$ is the outage probability with single eavesdropper and given by

\begin{eqnarray}
\label{eqn_outage_probability_direct_1}
 & &P_{Direct}^{1}\left(d_{sd} \right) = x^2 \sum\limits_{k = 0}^\infty  {\frac{{{{\left( { - 1} \right)}^k}}}{{1 + \frac{{k\alpha }}{2}}}} {\left( {\frac{1}{x^2}} \right)^{1 + \frac{{k\alpha }}{2} }} \label{eqn_cellular_direct_outage_probability}\\
 &=& \left\{ \begin{gathered}
  {x^2}\ln \left( {1 + \frac{1}{{{x^2}}}} \right) \quad when \quad \alpha = 2 \hfill \nonumber \\
  2{x^2}\left( {\frac{1}{6}\ln \frac{{\left( {{x^2} - x + 1} \right)}}{{{{\left( {x + 1} \right)}^2}}} + \frac{1}{{\sqrt 3 }}\left( {\arctan \frac{{2 - x}}{{\sqrt 3 x}} + \frac{\pi }{6}} \right)} \right) \nonumber \\
   \qquad when \quad \alpha = 3 \hfill \nonumber \\
  {x^2}\arctan \left( {\frac{1}{{{x^2}}}} \right) \quad when \quad \alpha = 4 \hfill \nonumber \\
\end{gathered}  \right.
\end{eqnarray}
with $x = {d_{sd}}/{R}$ is  the normalized distance between the BS and MU.
\end{proposition}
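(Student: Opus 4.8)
\emph{Overview.} I would split the statement into two independent tasks: the sandwich bounds relating $P_{Direct}^{N}$ to $P_{Direct}^{1}$, and the explicit evaluation of $P_{Direct}^{1}$. With $N$ non-cooperative eavesdroppers the downlink message is insecure precisely when at least one of them out-performs the legitimate link, i.e. when $\frac{|h_{sd}|^{2}}{d_{sd}^{\alpha}} < \max_{1\le i\le N}\frac{|h_{se_{i}}|^{2}}{d_{se_{i}}^{\alpha}}$; writing $A_{i}$ for the event that eavesdropper $i$ alone causes outage, $P_{Direct}^{N}=Pr\!\left(\bigcup_{i=1}^{N}A_{i}\right)$. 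The key move is to condition on the legitimate channel gain $g:=|h_{sd}|^{2}$. Given $g$, the $A_{i}$ are conditionally independent (the positions and fades of distinct eavesdroppers are mutually independent and independent of $g$) with one common conditional probability $q(g)$, so $P_{Direct}^{N}=1-\mathbb{E}_{g}\!\left[(1-q(g))^{N}\right]$ while $P_{Direct}^{1}=1-\mathbb{E}_{g}[1-q(g)]=\mathbb{E}_{g}[q(g)]$. Setting $Z:=1-q(g)\in[0,1]$, the lower bound is immediate from $Z^{N}\le Z$, and the upper bound follows from Jensen's inequality applied to the convex map $t\mapsto t^{N}$, which gives $\mathbb{E}[Z^{N}]\ge(\mathbb{E}[Z])^{N}=(1-P_{Direct}^{1})^{N}$.

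\emph{Evaluation of $P_{Direct}^{1}$.} Since $|h_{sd}|^{2}$ and $|h_{se}|^{2}$ are i.i.d.\ unit-mean exponentials, conditioning on the eavesdropper distance $d_{se}=r$ gives the standard ratio identity $Pr\!\left(\frac{|h_{sd}|^{2}}{d_{sd}^{\alpha}}<\frac{|h_{se}|^{2}}{r^{\alpha}}\right)=\frac{d_{sd}^{\alpha}}{d_{sd}^{\alpha}+r^{\alpha}}$. A uniform eavesdropper in the disc of radius $R$ has $r$-density $2r/R^{2}$ on $[0,R]$, so $P_{Direct}^{1}=\int_{0}^{R}\frac{d_{sd}^{\alpha}}{d_{sd}^{\alpha}+r^{\alpha}}\frac{2r}{R^{2}}\,dr$; the scalings $r=Rt$ and then $t=xw$, with $x=d_{sd}/R$, reduce this to $2x^{2}\int_{0}^{1/x}\frac{w}{1+w^{\alpha}}\,dw$. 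Expanding $\frac{1}{1+w^{\alpha}}=\sum_{k\ge0}(-1)^{k}w^{k\alpha}$ and integrating term by term yields the claimed series $x^{2}\sum_{k\ge0}\frac{(-1)^{k}}{1+k\alpha/2}\,(x^{-2})^{1+k\alpha/2}$. For $\alpha=2$ and $\alpha=4$ I would instead substitute $v=t^{2}$ directly in $\int_{0}^{1}\frac{x^{\alpha}}{x^{\alpha}+t^{\alpha}}2t\,dt$, reducing to $\int_{0}^{1}\frac{x^{2}}{x^{2}+v}\,dv$ and $\int_{0}^{1}\frac{x^{4}}{x^{4}+v^{2}}\,dv$ (a logarithm and an arctangent, respectively); for $\alpha=3$ I would factor $x^{3}+t^{3}=(t+x)(t^{2}-xt+x^{2})$, split $\frac{t}{x^{3}+t^{3}}$ into partial fractions $\frac{-1/(3x)}{t+x}+\frac{t/(3x)+1/3}{t^{2}-xt+x^{2}}$, and integrate each piece to recover the logarithmic and arctangent terms quoted.

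\emph{Where the care is needed.} The conceptual crux is the $N$-eavesdropper step: correctly identifying the outage event as a union and verifying the conditional-independence factorization after conditioning on $g$, because it is exactly this structure (and the Jensen step it feeds) that produces the tight upper bound $1-(1-P_{Direct}^{1})^{N}$ rather than the cruder union bound $N\,P_{Direct}^{1}$. One should also note that the geometric expansion of $\frac{1}{1+w^{\alpha}}$ converges on $[0,1/x]$ only when $x\ge1$; in the physically relevant range $x\in(0,1]$ the displayed series is to be understood as the analytic continuation of the integral, which the closed forms for $\alpha\in\{2,3,4\}$ make explicit. The $\alpha=3$ integration is the most tedious but is entirely routine.
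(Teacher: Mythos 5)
Your proof is correct, and the evaluation of $P_{Direct}^{1}$ --- integrating the ratio identity $d_{sd}^{\alpha}/(d_{sd}^{\alpha}+r^{\alpha})$ against the radial density $2r/R^{2}$ and then specializing $\alpha\in\{2,3,4\}$ --- is exactly the paper's route (your partial-fraction setup for $\alpha=3$ does reproduce the quoted closed form). The only genuine difference is in the $N$-eavesdropper bounds. The paper writes $P_{Direct}^{N}$ as the expectation of one minus the probability of the intersection $\bigcap_{i}\{|h_{sd}|^{2}/d_{sd}^{\alpha}>|h_{se_{i}}|^{2}/d_{se_{i}}^{\alpha}\}$, asserts $Pr(\bigcap_{i}A_{i})\geqslant\prod_{i}Pr(A_{i})$ with only the remark that it is ``obtained by using conditional probability,'' and then factors the expectation over the independent positions. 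You instead condition explicitly on the legitimate gain $g=|h_{sd}|^{2}$, note that the per-eavesdropper outage events become conditionally i.i.d.\ with probability $q(g)$, and get both inequalities from $Z^{N}\leqslant Z$ and Jensen applied to $t\mapsto t^{N}$ with $Z=1-q(g)$. This is precisely the rigorous content behind the paper's step $(a)$ (positive correlation of events that are all increasing in the shared variable $g$), so you have supplied a cleaner, self-contained justification of the same bound rather than a different bound. Your convergence caveat is also well taken and goes beyond the paper: termwise integration of $\sum_{k}(-1)^{k}w^{k\alpha}$ over $[0,1/x]$ is legitimate only for $x\geqslant 1$, so in the physically relevant range $x<1$ the displayed series diverges as written and must be read through the closed forms (or by splitting the integral at $w=1$); the paper passes over this silently.
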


\begin{figure}[t]
\begin{centering}
\includegraphics[scale=.30]{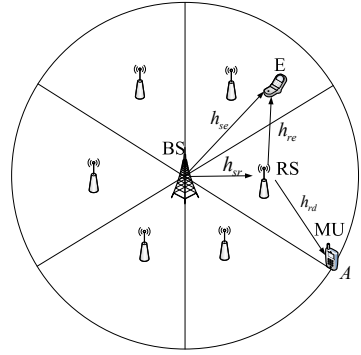}
\vspace{-0.1cm}
 \caption{An example of cellular networks with $6$ sectors and RS's. The RS's are placed on the angle bisector of each sector.}\label{fig:cellular_networks}
\end{centering}
\vspace{-0.3cm}
\end{figure}

\begin{IEEEproof}
The \textit{probability density function} of $d_{sd}$ is
\begin{equation}
  f\left(d_{sd}\right) = \frac{2d_{sd}}{R^2}, \quad 0\leqslant d_{sd} \leqslant R.
\end{equation}
We then can prove \eqref{eqn_outage_probability_direct_1}  by integrating
\begin{eqnarray*}
  P_{Direct}^{1}\left(d_{sd} \right) &=& \int_0^R {Pr\left(\frac{|h_{sd}|^2}{d_{sd}^{\alpha}} < \frac{|h_{se}|^2}{d_{se}^{\alpha}}\right) f(d_{se}) }d{d_{se}}.
\end{eqnarray*}
For \eqref{eqn_outage_N_Eve_1_Eve}, the left inequality is obvious. For the right one,
\begin{eqnarray*}
P_{Direct}^{N}(d_{sd}) &=& \displaystyle{\mathop{ \mathbb{E}}_{\{d_{se_{i}}, 1 \leqslant i\leqslant N \}}}\left\{Pr\left(\frac{|h_{sd}|^2}{d_{sd}^{\alpha}} < \max_{i} \frac{|h_{se}|^2}{d_{se_{i}}^{\alpha}}\right)\right\} \\
&=& \displaystyle{\mathop{ \mathbb{E} }_{\{d_{se_{i}}, 1 \leqslant i\leqslant N\}}}\left\{1 - Pr\left( \bigcap_{i=1}^{N}{\left(\frac{|h_{sd}|^2}{d_{sd}^{\alpha}} > \frac{|h_{se_{i}}|^2}{d_{se_{i}}^{\alpha}}\right)}\right)\right\} \\
&\stackrel{(a)}{\leqslant}& \displaystyle{\mathop{ \mathbb{E} }_{\{d_{se_{i}}, 1 \leqslant i\leqslant N\}}}\left\{1- \prod_{i=1}^{N}{Pr\left(\frac{|h_{sd}|^2}{d_{sd}^{\alpha}} >  \frac{|h_{se}|^2}{d_{se_{i}}^{\alpha}}\right)} \right\}  \\
&=& 1- \left(1- P_{Direct}^{1}\left(d_{sd} \right)\right)^N,
\end{eqnarray*}
where $e_{i}$ denotes the $i$th eavesdropper and $\mathbb{E}$ represents expectation. The inequality $(a)$ is obtained by using conditional probability.
This completes the proof of Proposition \ref{proposition_cellular_outage_no_relay}.
\end{IEEEproof}
$P_{Direct}^{N}$ can be obtained from numerical simulation. We plot $P_{Direct}^{N}$ and $1-\left( 1-P_{Direct}^1 \right)^N$ in Fig. \ref{fig:Cellular_outage_no_relay}. First, we observe that $P_{Direct}^{N}$ increases very fast with $N$, meaning that only a few non-cooperative eavesdroppers will block nearly all the secure connections from the BS to MU. Second, the outage probability is decreasing with $\alpha$ when $x$ is small while increasing when $x \approx 1$. Interestingly, it  suggests the MUs near the BS prefer severer path loss while the MUs near the cell edge prefer milder path loss. Finally, for the  cell edge MUs, i.e., $d_{sd}=R$, we have
\begin{eqnarray*}
  P_{Direct}^{1}(R)= \left \{ \begin{gathered}
  \begin{aligned}
    & \ln 2 \approx 0.693 \quad    & when \quad \alpha = 2 \hfill \\
    & \frac{2\pi}{3 \sqrt{3}} - \frac{2}{3} \ln2 \approx 0.747 \quad   &when \quad \alpha = 3 \hfill \\
    & \frac{\pi}{4} \approx 0.785 \quad  & when \quad \alpha = 4 \hfill \\
    \end{aligned}
  \end{gathered}  \right.
\end{eqnarray*}
It shows that the cell edge MUs have no secure connections to the BS with very high probability.

To deal with this issue, we then propose a heuristic relay placement strategy as follows.
The cell is first partitioned to $M$ sectors and MUs in every sector is served by a relay as depicted in Fig. \ref{fig:cellular_networks}. Obviously, the MUs located at both the cell edge and sector edge, like point $\textit{A}$ (see Fig. \ref{fig:cellular_networks}), have the largest outage probability. As $P_{Direct}^N(R)$ has the upper bound $1 - \left(1- P_{Direct}^{1}\left(R\right)\right)^N $, we consider only one eavesdropper and aim to minimize $P_{Direct}^1(R)$. We then search the optimal relay position and power to minimize the outage probability of such MUs by numerical simulation.
\begin{figure}
\begin{centering}
\includegraphics[scale=0.50]{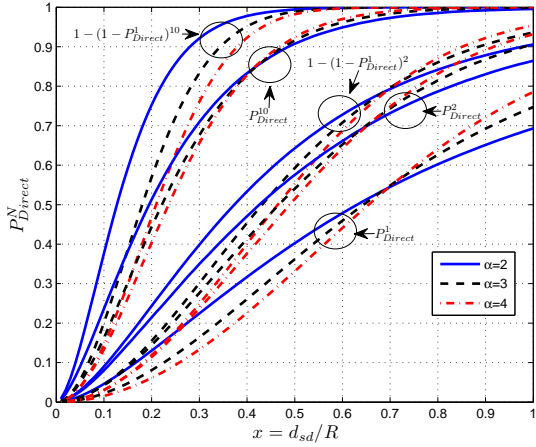}
\vspace{-0.1cm}
\caption{Outage probability vs the normalized distance between the BS and MU.} \label{fig:Cellular_outage_no_relay}
\end{centering}
\vspace{-0.3cm}
\end{figure}

\begin{figure}
\begin{centering}
\includegraphics[scale=0.58]{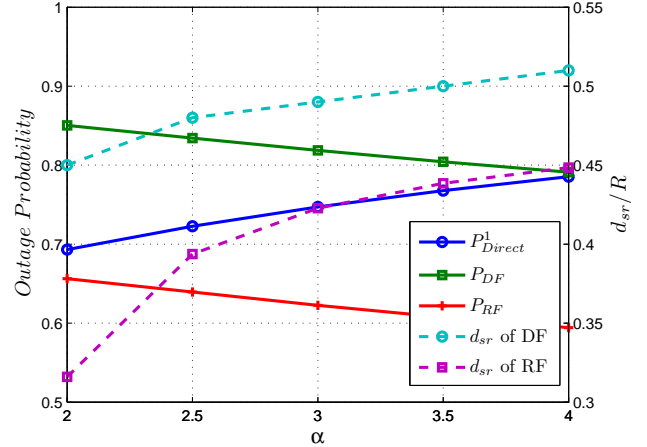}
\vspace{-0.1cm}
\caption{Outage probability of MU located at point A and the position of relay vs path loss exponent with one eavesdropper.} \label{fig:Cellular_outage_relay_position_vs_alpha}
\end{centering}
\vspace{-0.3cm}
\end{figure}

We show the numerical results in Fig. \ref{fig:Cellular_outage_relay_position_vs_alpha} where $M=6$, $N=1$, and each relay has the power constraint $p_r\leqslant p_s$. It is shown that in such cases, RF is better than direct transmission while DF is inferior to direct transmission. Moreover, the outage probability of direct transmission of the MUs at point $A$ is increasing with the path loss exponent $\alpha$, while the outage probability with DF or RF relay decrease with $\alpha$. Finally, the best relay position approaches to the cell edge as $\alpha$ increases.


\section{Conclusion}
\label{conclusion}
In this paper, we have considered relay placement for secure connection problem. Through analytical expressions and numerical results, we have shown that relay is beneficial for establishing secure connection for the four-node system and cellular networks. We also found that relay transmission is especially helpful when path loss is severer. Furthermore, it was shown that the RF relay strategy, by introducing different randomization in each hop, is much better than the traditional DF relay strategy.

\bibliographystyle{IEEEtran}
\bibliography{IEEEabrv,Tao_CL2012-0582}

\end{document}